\newtheorem{ex}{Example}
\title{Abstract Interpretation of Binary Code with Memory Accesses
  using Polyhedra}
\author{Cl\'ement Ballabriga and Julien Forget and Giuseppe Lipari}
\institute{
  University of Lille, CNRS, Centrale Lille, UMR 9189 -- CRIStAL\\
  \email{\{name.surname\}@univ-lille1.fr}
}
\begin{document}

\maketitle

\begin{abstract}
  In this paper\footnote{An earlier version of this paper has been
    submitted to TACAS 2018
    (\url{http://www.etaps.org/index.php/2018/tacas}) for
    peer-review. Compared to the submitted paper, this version
    contains more up-to-date benchmarks in Section 6.} we propose a
  novel methodology for static analysis of binary code using abstract
  interpretation. We use an abstract domain based on polyhedra and two
  mapping functions that associate polyhedra variables with registers
  and memory.

  We demonstrate our methodology to the problem of computing upper
  bounds to loop iterations in the code. This problem is particularly
  important in the domain of Worst-Case Execution Time (WCET) analysis
  of safety-critical real-time code. However, our approach is general
  and it can applied to other static analysis problems.
\end{abstract}

\section{Introduction}


In real-time systems it is important to compute upper bounds to the
execution times of every function, and check that they complete before
their deadlines under all possible conditions. Worst-Case Execution
Time (WCET) analysis consists in computing (an upper bound to) the
longest path in the code. WCET analysis is usually performed on the
binary code, because it needs information on the low-level
instructions executed by the hardware processor in order to compute
the execution time.

In this paper, we propose a static analysis of binary code based on
abstract interpretation using polyhedra. Our motivation is the need to
enhance existing WCET analysis by improving the computation of
upper bounds on the number of iterations in loops, and by detecting
unfeasible paths.

Most analyses by abstract interpretation proposed in the literature are
performed on source code. However, there are several important
advantages in performing static analysis of binary code:
1) we analyze the code that actually runs on the machine, hence 
  no need for additional assumptions on how the compiler works;
2) by gaining access to the memory layout, we can precisely
   identify problems with pointers (alias, buffer overflows, etc.)
   that are not easily identified when working only on source
   code;
3) We can perform the analysis even without access to
  the source code.

To the best of our knowledge, no other work has tackled the problem
of static analysis of binary code using abstract interpretation with
polyhedra. The main reason is probably the difficulty in representing
the state of the program.

In fact, binary code lacks important
structural information: the type of variables, their
structure and relations, their scope and lifetime, etc.
More specifically, in most of the existing abstract interpretation
papers in the literature, the abstract state of a program is represented
by constraints on the \emph{program variables}. The underlying
assumption is that variables are well identified and in a relatively
small number. In the binary code, the notion of program variable is
lost, so we can only analyse processor registers and memory
locations. Unfortunately, a representation of the abstract state that
includes all registers and all possible memory locations is too large to
be managed easily in the analysis.

\paragraph{Contributions of this paper.}

A key observation is that the number of memory locations that are
effectively used in the binary code approximately corresponds to the
number of program variables in the source code (at least, it is of the
same order of magnitude). Based on this observation, in this paper we
propose to identify the subset of registers and memory locations to be
represented in the abstract state 
as the analysis progresses.

To this end, we propose an abstract domain consisting of 1) \emph{a
  polyhedron}, to represent the linear constraints on the polyhedra
variables; 2) a \emph{register mapping} that maps register names to
polyhedra variables; 3) a \emph{memory mapping} that maps addresses
and their values to the respective polyhedra variables. The abstract
state is constructed and modified as the analysis
progresses: in particular, new variables and constraints may be added
to the polyhedra as new memory locations are discovered in the code,
and the two mappings are modified accordingly. We must check that our
abstract representation remains consistent at all stages, for example
when joining the states from two different paths in the code. To do
this, we carefully consider \emph{aliasing} (or \emph{equivalence})
between variables. We formally prove that the operations that
manipulate the abstract state are semantically consistent with the
abstract interpretation framework.

We present the application of this method to the problem of computing
upper bounds to loop iterations. Finally, we evaluate the
performance of our method on benchmarks and we compare with existing
static analysis tools to evaluate the precision of our approach.


\section{Related works}
\label{sec:related}

Many papers on static analysis are based on abstract
interpretation. Approaches vary widely depending on the abstract domain,
the type of target programs (source or binary code, language,
application type, etc.), and the analysis goal (i.e. the kind of
information we want to discover). Here we attempt to summarise the
papers that are the most closely related to our research.


Abstract interpretation using polyhedra has been first described
in~\cite{cousot1978automatic}. It has been used extensively in the
context of compilers. For instance, the PAGAI~\cite{henry2012pagai}
analyzer processes LLVM Intermediate Representation (IR) using various
abstract domains (including polyhedra) to detect several properties
(such as loop invariants). Compared to our approach, LLVM IR is closer
to the source code, as it contains information on variables and their
types.

An important problem when dealing with binary code analysis is to
figure out the set of interesting data locations used by the
program. This is related to pointer analysis (the so-called
aliasing problem), and has been extensively
studied~\cite{Hind:2001:PAH:379605.379665,hardekopf2007ant}. While the
majority of pointer analyses have been proposed in the context of
compiler optimizations, a certain number of ideas can be borrowed and
applied to binary code analysis.

An example of binary code analysis by abstract interpretation is
described in~\cite{balakrishnan2004analyzing} and later improved
in~\cite{reps2008improved}. Compared to our approach, this method uses a
different abstract domain (interval analysis with congruence, and affine
relations~\cite{muller2004precise}). Furthermore, the set of abstract
memory locations are computed by a preprocessing analysis using IDA
Pro~\cite{eagle2011ida}: in contrast, we determine them dynamically
during the analysis.

In this paper, our approach is applied to static loop bound
estimation, in the context of WCET analysis, so we compare our results
with other loop bound estimation tools. The \orange\
tool~\cite{bonenfant2008orange} is based on an abstract interpretation
method defined in~\cite{ammarguellat1990automatic}. It provides a very
fast estimation of loop bounds, but it is restricted to C source code.
SWEET~\cite{sweet} features a loop bound estimator, which works on an
intermediary representation (ALF format). The approach is based on
slicing and abstract interpretation and it generally provides very
tight loop bounds even in complex cases, but the running time of the
analysis appears to depend on the loop bounds, and in our experience
for large loop bounds the analysis did not terminate.

Compared to these existing works, our approach combines the polyhedral
domain with binary code analysis, taking into account memory
accesses; our method is sound and always terminates. 


\section{Abstract domain}
\label{sec:domain}

Abstract interpretation \cite{cousot1992} is a static program analysis
that provides a sound approximation of the semantics of the analyzed
program. Instead of computing exact \emph{concrete program states}
(e.g. data valuation), abstract interpretation computes approximate
\emph{abstract program states} (e.g. inequalities on data valuation).
The set of abstract states is called the \emph{abstract domain}. Our
analysis is based on the \emph{polyhedral abstract domain}, to which
we add information to track relations between polyhedra variables and
registers or memory locations.

\subsection{Polyhedra} 

A \emph{polyhedron} $\mathcal{P}$ denotes a set of points in a
$\mathbb{Z}$ vector space bounded by linear constraints (equalities or
inequalities). More formally, let $C_n$ be the set of linear
constraints in $\mathbb{Z}^n$. Then $\cons{c_1, c_2, ..., c_m}$ (with
$c_i\in C_n$ for $1\leq i\leq m$) denotes the polyhedron consisting of
all the vectors in $\mathbb{Z}^N$ that satisfy constraints $c_1$,
$c_2$, $\ldots$ $c_m$. In our work, we only consider non-strict
inequalities and equalities. 

Like any other abstract domain, the polyhedral domain is equipped with
operations that form a \emph{lattice}~\cite{bagnara2008parma}. We
present these operations below, along with some other classic polyhedra
operations~\cite{bagnara2008parma} that we will use in the
following. Let $p$, $p^{\prime}$ be two polyhedra:
\begin{itemize}
\item We denote $p \subseteq p^{\prime}$ iff
  $\forall s \in p, s \in p^{\prime}$. This is the partial
  order of the lattice;
\item We denote $p^{\prime\prime} = p \cup p^{\prime}$ the smallest
  polyhedron such that $p \subseteq p^{\prime\prime}$ and
  $p^{\prime} \subseteq p^{\prime\prime}$. This operation is the
  \emph{convex hull}. It is the least upper bound of the lattice;
\item We denote $p^{\prime\prime} = p \cap p^{\prime}$ the intersection
  of $p$ and $p^{\prime}$ (i.e. the union of the constraints of $p$ and
  $p^{\prime}$). It is the greatest lower bound of the lattice;
\item The bottom state $\bot$ of the lattice is the empty polyhedron
  (the set of constraints has no solution\footnote{Checking emptiness
    of a polyhedron over $\mathbb{Z}$ is a complex
    operation. Therefore, in practice we approximate it by checking
    emptiness over the rationals.}) and the top state $\top$ is the
  polyhedron containing all points (the set of constraints is empty);
\item We let $\vars{p}$ denote the set of variables appearing in the
  constraints of $p$;
\item Let $|S|$ denote the cardinality of set $S$. We let
$\proj{p}{x_1\ldots x_k}$ denote the projection of polyhedron $p$ on
  space $x_1\ldots x_k$, with $k<|\vars{p}|$ (this effectively removes
  other variables from the polyhedron constraints);
\item We denote $\max(p,x)$ the greatest value of $x$ satisfying the
  constraints of $p$.
\end{itemize}

\subsection{Registers and memory}

In polyhedral analysis of source code, variables of the polyhedra are
related to variables of the source code. In our case, polyhedra
variables are related to registers and memory locations. An abstract
state is defined as a triple $(p,m,\derefer)$, where $p$ is a
polyhedron, $m$ is a \emph{register mapping} and $\derefer$ is an
\emph{address mapping}. In the rest of the paper, the term variable
refers to polyhedron variables.

Let us first consider registers. Let $r$ be a register, $v$ be a
variable of $\vars{p}$ and $m$ be a register mapping. Then we have
$m(r)=v$ iff $v$ represents the value of $r$ in $p$. We denote
$\varsR{p}$ the image of mapping $m$. We denote $m[r:v]$ the mapping
$m^{\prime}$ such that $m^{\prime}(r)=v$ and for every register
$r^{\prime}\neq r$, $m^{\prime}(r^{\prime})=m(r^{\prime})$. In other
words, $m[r:v]$ denotes a single mapping substitution.

Let us now consider memory locations. We want to keep track in our
abstract state of the fact that, say variable $x_0$ represents a memory
address, and the value at this address is represented by variable
$x_1$. To to this, we introduce two more subsets $\varsA{p}$ and
$\varsC{p}$ of $\vars{p}$. Variables in $\varsA{p}$ define memory
addresses, while variables in $\varsC{p}$ define memory values. The
subsets $\varsR{p}$, $\varsA{p}$ and $\varsC{p}$ are disjoint, so we can
distinguish between variables corresponding to memory values, memory
addresses, registers or other. $\derefer$ is a partial bijection (some
addresses may have no corresponding values) from $\varsA{p}$ to
$\varsC{p}$, such that $\derefer(x_1) = x_2$ iff $x_2$ represents the
value at the memory address represented by $x_1$. The substitution
$\derefer[x_1:x_2]$ is defined similarly to $m[r:v]$.

\subsection{Aliasing}

In a general sense, \emph{aliasing} occurs in a program when a data
location can be accessed through several symbolic names. In our context,
we define an \emph{aliasing relation} between two variables $x_1$ and
$x_2$ of a polyhedron $p$ as follows:
\begin{itemize}
\item Must alias or \emph{equivalent}: the constraint $x_1 = x_2$ is
  true for every point in $p$. This is also denoted $x_1\equiv
  x_2$. This holds iff $\cons{x_1=x_2})\subseteq p$;
\item May alias or \emph{overlapping}: the constraint $x_1 = x_2$ is
  true for at least one point in $p$. This holds iff $\cons{x_1=x_2})\cap p\neq\emptyset$;
\item Cannot alias or \emph{independent}: the constraint $x_1 = x_2$
  is false for every point in $p$. This holds iff $\cons{x_1=x_2})\cap p=\emptyset$.
\end{itemize}
The aliasing relation between a register $r$ and a variable $x$ is
defined by the aliasing relation between $m(r)$ and $x$. Similarly, the
aliasing relation between two registers $r_1$, $r_2$ is defined by the aliasing
relation between $m(r_1)$ and $m(r_2)$.

As we will see in the following sections, while progressing in the
analysis, it may happen that our algorithm generates equivalent
variables to represent the same memory location. The presence of
aliasing complicates the analysis, so we make sure that each memory
location is represented by one single variable.


\begin{definition}
  Let $s=(p,m,\derefer)$ be an abstract state. We say that $s$ is a
  \emph{consistent state} iff:
  \[
     \forall \text{distinct } x_1, x_2 \in \varsA{p}, x_1 \not\equiv x_2 
  \]
\end{definition}

In a consistent state, each data location is defined by a single
variable. Indeed, each register is defined by a single variable because
mapping $m$ is a function. The same is true for values contained in
memory locations because $\derefer$ is a function and because the state
is consistent.

To preserve the consistency of abstract states, if at some point in the
analysis we detect equivalent address variables, we merge the variables
and their constraints.  Let $x_1, x_2$ be two distinct variables of
$\varsA{p}$ such that $x_1 \equiv x_2$. We merge $x_1$ and $x_2$ as
follows:
\begin{align*}
      Merge &((p, m, \derefer), x_1, x_2) = (p_1 \cup p_2, m, \derefer^{\prime}) \\
      &p_1 = p \text{ where $\derefer(x_2)$ is replaced by $v_2$} \\
      &p_2 = p \text{ where $\derefer(x_1)$ is replaced by $v_1$, $x_2$ by $x_1$, and $\derefer(x_2)$ by $\derefer(x_1)$} \\
      &\derefer^{\prime} = \derefer[x_2:v_3] \\
      &v_1=\FV, v_2=\FV, v_3=\FV
\end{align*}

Let us describe the operation in detail. We first transform $p$ into two
different polyhedra $p_1$ and $p_2$. Polyhedron $p_1$ is the same as
$p$, except that we eliminate the constraints on $\derefer(x_2)$: this
is done by replacing $\derefer(x_2)$ with a new variable
$v_2$ in all constraints. Polyhedron $p_2$ is the same as $p$, except
that: 1) we remove the constraints on $\derefer(x_1)$; 2) we rename
$x_2$ as $x_1$ and $\derefer(x_2)$ as $\derefer(x_1)$ in all
constraints.  The polyhedron resulting from the merge is the
convex-hull of $p_1$ and $p_2$. Also, $x_2$ is not needed
anymore, so we can change its $\derefer$ mapping to a new variable
$v_3$.

Equivalent address variables are merged whenever adding a new
constraint, using the following unification function:
\[
Unify(p, m, \derefer) =
\begin{cases}
  Merge((p, m, \derefer), x_1, x_2) &\text{ if }\exists x_1,x_2 \in
    \varsA{p}, x_1 \equiv x_2\\
  (p, m, \derefer) &\text{ otherwise.}
\end{cases}
\]


\section{Computing abstract states}
\label{sec:interpretation}

A program is represented by a graph $G = <I, E>$. The set of nodes $I$
is the set of instructions of the program. A directed edge
$(b_1,b_2)\in E$ (where $E\subseteq I\times I$), represents a valid
succession of two basic instructions in the program execution. 

We say that a node $b_i \in I$ is a \emph{predecessor} of $b_j$, and
denote $b_i\pred b_j$, iff $(b_i, b_j)\in E$. We say that $b_i$
\emph{dominates} $b_j$, and denote $b_i\dom b_j$, iff all paths from
the entry node to $b_j$ go through $b_i$. We say that node $h$ is a
\emph{loop header} if it has at least one predecessor $b_i$ such that
$h \dom b_i$. We denote $l_h$ the loop associated to header $h$. An
edge $(b_i,h)$ such that $h\dom b_i$ is called a \emph{back-edge} of
loop $l_h$; any other edge entering the loop header $h$ is an
\emph{entry-edge}.

We now describe our model of the processor architecture. We assume
that all data locations have the same size and that memory accesses
are aligned to the word size. We also assume that function calls are
inlined. We consider a simplified instruction set made up of the
following instructions\footnote{Though this instruction set is very
  small, the principles of our analysis can be  easily extended to
  support a richer set of instructions.}. Let $r_1$, $r_2$, $r_3$ be
registers and $c$ be an integer constant:
\begin{itemize}
\item \lstinline!OP r1 r2 r3!: stores the result of operation
  $OP(r_2,r_3)$ in register $r_1$, where $OP$ denotes an arbitrary
  binary arithmetic or logic operation;
\item \lstinline!BOP r1 r2!: branches to the address contained in $r_1$
  if condition $OP(r_2)$ is true, where $OP$ denotes an arbitrary
  unary logic operation;
\item \lstinline!LOADI r1 c!: loads constant $c$ in register $r_1$;
\item \lstinline!LOAD r1 r2!: loads the value contained in the address
  designated by $r_2$ in register $r_1$;
\item \lstinline!STORE r1 r2!: stores the value of register $r_2$ at
  the address designated by $r_1$.
\end{itemize}

The abstract interpretation of a program consists in computing an
abstract state for each edge of the program. For each node of the
program, the analyses computes the state of the output edge(s) based
on the state of the input edge(s). The state for the entry edge of the
program is $(\top,\emptyset,\emptyset)$. The operations involved in
the computation of abstract states are:
\begin{itemize}
\item $Update$: this is a monotonic function. It takes an
  instruction, the abstract state before the instruction, and returns
  the abstract state after it;
\item $\join$: when an instruction has several input edges (due to
  branching) and is not a loop header, its input state is
  computed by the $\join$ operation;
\item $\widen$: when an instruction is a loop header, its input state
  is computed by the widening operation $\widen$.
\end{itemize}

These operations are applied repeatedly on $G$ until a fixpoint is
reached. Since the polyhedral domain admits infinite ascending chains,
specific mechanisms are used to enforce the convergence to a
fixpoint: this is ensured by the widening operation. A widening
operation for the polyhedral domain has first been proposed
in~\cite{polka97}, and further improved
in~\cite{bagnara2003,simon2006}. In our work, we use 
the widening operator of~\cite{bagnara2003}.

Finally, function $\FV$ returns a new fresh variable that has never
been used at any other point during the analysis. It is very important
that the variable is fresh globally (for the whole analysis) and not
only locally (for the current state), so as to avoid using the same
variable in two different states for representing unrelated
constraints.

Figure \ref{fig:example} reports an example that will be used in the
rest of the section to illustrate the operations on the abstract
states.

  \begin{figure}
    \begin{multicols}{2}
      {\scriptsize\begin{algorithmic}[1] \State LOADI R1 4 \State
          LOADI R2 5 \State LOADI R3 1000 \State LOADI R4 9 \State
          STORE R3 R1 \State EQ R5 R1 R2~~~~\# R5$\gets$R1==R2 \State
          BNZ R4 R5~~~~~~~\# Branch to 9 if R5$\neq$0 \State STORE R3
          R2 \State LOAD R6 R3
        \end{algorithmic}}
    \end{multicols}

  \noindent{\small $\begin{array}{|c|c|c|c|}
      \hline
      Edge & Polyhedron & Registers & Memory \\
      \hline
      (L6,L7) &
                p_1=\langle \{x_1=4,x_2=5,x_3=1000, & m_1=\{R_1:x_1,
                                    & \derefer_1=\{x_5:x_6\}\\
      & x_4=9,x_5=x_3,x_6=x_1, & R_2:x_2,R_3:x_3, & \\
           & x_7=(x_1-x_2)\rangle\} & R_4:x_4,R_5:x_7\} & \\
      \hline
      e=(L8,L9) & p_1\cap\cons{x_7=0,x_8=x_2} & m_1 & \derefer_1[x_5:x_8]\\
      \hline
      e'=(L7,L9) & p_1 & m_1 & \derefer_1\\
      \hline
      join(e,e') & p_3=(p_1\cap\cons{x_7=0,x_6=x_2})\cup p_1=& m_1 & \derefer_2=\{x_5:x_6\}\\
           & \cons{x_1=4,\ldots,x_1\leq x_6\leq x_2} & & \\
      \hline
      (L9,exit) & p_3\cap\cons{x_{10}=x_6} & m_1[R6:x_{10}] &
                                                               \derefer_2\\
      \hline
                    \end{array}$}
                  
                  \caption{Example of analysis}
                  \label{fig:example}
                \end{figure}

\subsection{Binary operation}

We distinguish two possible cases. In the first case, the relation
$r_1 = OP(r_2,r_3)$ is linear and 
the $\update$ function can be defined as follows:
\begin{align*}
	Update & (OP~r_1~r_2~r_3, (p, m,\derefer)) =
                (p^{\prime}, m^{\prime},\derefer) \\
               & p^{\prime} = p \cap \cons{x_i = OP(m(r_2),m(r_3))} \\
               & m^{\prime} = m[{r_1}:x_i] \quad\quad x_i=\FV
\end{align*}
For instance, 
in Figure~\ref{fig:example} 
Line~6 introduces the constraint
$x_7=(x_1-x_2)$ (the equality test expressed as a linear constraint) and
$m_1(R_5)=x_7$.

If no linear relation can be determined, we assume that $r_1$
can contain any value after the update. In that case, the $\update$
function is defined as follows:
\begin{align*}
	Update^{\prime}&(OP~r_1~r_2~r_3, (p, m,\derefer)) = (p, m^{\prime},\derefer) \\
                       & m^{\prime} = m[{r_1}:\FV]
\end{align*}

\subsection{Branching}

Branching instructions have two out-edges, to which different abstract
states can be associated. We use \emph{filtering} to represent the
impact of the condition on the abstract state. It is applied only if the
branching condition is a linear constraint, otherwise the branching
condition is ignored. Let $taken$ denote the edge corresponding to the
case where the branch condition is true, and $not\_taken$
denote the other edge. The abstract states for these edges are computed
as:
\begin{gather*}
  Update_{taken}(BOP~r_1~r_2, (p,m,\derefer) = (p^{\prime},m,\derefer)
  \\
  p^{\prime} = p \cap \cons{BOP(m(r_2))}\\
  Update_{not\_taken}(BOP~r_1~r_2, (p,m,\derefer) = (p^{\prime},m,\derefer)
  \\
  p^{\prime} = p \cap \cons{\neg BOP(m(r_2))}\\
\end{gather*}
For instance, 
in Figure~\ref{fig:example}
we add the constraint $x_7=0$ on
edge $(L7,L8)$ (appearing also on edge $(L8,L9)$). The constraint
$x_7\neq 0$ cannot be expressed as a linear constraint, so it is not
added on edge $(L7,L9)$.


\subsection{Load}

The impact of the immediate load instruction is straightforward:

\begin{align*}
  Update&(LOADI~r_1~c,(p,m,\derefer))=(p^{\prime},m^{\prime},\derefer)\\
        & p^{\prime} = p \cap \cons{x_i = c} \quad m^{\prime} = m[r_1:x_i]\quad x_i=\FV
\end{align*}

Let us now consider the non-immediate load instruction. If the input
state contains a memory address variable that is equivalent to the load
address, then in the output state the value of the destination register
is the value of the memory value mapped to this address:
\begin{align*}
	Update&(LOAD~r_1~r_2, (p, m, \derefer)) =
                (p^{\prime}, m^{\prime}, \derefer) \\
              & p^{\prime} = p \cap \cons{x_i = \derefer(a)} \quad m^{\prime} = m[r_1:x_i] \\
              & a\equiv r_2 \quad\quad\quad\quad x_i=\FV
\end{align*}
Otherwise, the destination register value is undefined:
\begin{align*}
	Update&(LOAD~r_1~r_2, (p, m)) = (p, m^{\prime}) \\
	& m^{\prime} = m[{r_1}:\FV]
\end{align*}

For instance, 
in Figure~\ref{fig:example}
Line~9 we have
$x_5\equiv r_3$ and $\derefer(x_5)=x_6$, so we introduce the constraint
$x_{10}=x_6$ and $m'[R6=x_{10}]$.

\subsection{Store}

Again, we need to consider the impact of aliases. First, we will define
two helper functions. The \emph{Create} operation is used to create a
new memory mapping. It takes as parameters the register containing the
address ($r_d$), the register holding the value to store at this address
($r_a$), and the current abstract state.
\begin{align*}
  Create &(r_d, r_a, (p, m,\derefer)) = (p^{\prime}, m,\derefer')\\
         & p^{\prime} = p \cap \cons{x_i = m(r_d), x_j = m(r_a)} \\
         & \derefer^{\prime} = \derefer[x_i:x_j] \quad x_i=\FV \quad x_j=\FV
\end{align*}
For instance, 
in Figure~\ref{fig:example}
Line~5 creates a new memory
mapping: it introduces the
constraints $x_5=x_3$, $x_6=x_1$ and $\derefer_1(x_5)=x_6$.

The \emph{Replace} operation is used to handle the replacement of the
value of an already mapped memory address, overwriting the previous
value. It takes as parameters the variable representing the memory
address ($a$), the register holding the new value ($r_a$), and the
current abstract state.
\begin{align*}
  Replace &(a, r_a, (p, m, \derefer)) = (p^{\prime}, m, \derefer^{\prime}) \\
          & p^{\prime} = p \cap \cons{x_i = m(r_a)} \\
          & \derefer^{\prime} = \derefer[a:x_i] \quad\quad x_i=\FV
\end{align*}
For instance, 
in Figure~\ref{fig:example}
Line~8 replaces a previous
mapping: it introduces the constraint $x_8=x_2$ and maps $x_5$ to $x_8$
(instead of $x_6$ previously).

Finally, the new abstract state is computed as follows (note that there
is at most one address variable $a$ such that $a\equiv r_1$):
\begin{gather*}
	Update(STORE~r_1~r_2, (p, m, \derefer)) = 
                \begin{cases}
                  Replace(a, r_2, s^{\prime}) & \text{if }a\equiv r_1\\
                  Create(r_1, r_2, s^{\prime}) & \text{otherwise}
                \end{cases}\\
	s^{\prime} = \underset{\{a\in A| a\textsf{ overlaps } r_1)}{\join} (Replace(a, r_2, (p, m,\derefer)),(p, m, \derefer))
\end{gather*}

\subsection{Join and widening}

The procedure for joining two abstract states is detailed in
Algorithm~\ref{alg:join}. Polyhedra are joined using the classic
polyhedra convex hull (line~\ref{algstep:hull}). Concerning register and
memory mappings, we first unify equivalent variables
(lines~\ref{algstep:eq-adr} and~\ref{algstep:eq-register}), so that the
same variable is used to represent the same register or memory location
in both input states. Then, if a memory location or register is bound in
one input state and unbound in the other, it is associated with a fresh
variable in the output state (lines~\ref{algstep:unbound-reg}
and~\ref{algstep:unbound-adr}), meaning that we consider that there are
no constraints concerning it. The widening operation is defined exactly
in the same way, except that $\widen$ is used in place of $\cup$.

\begin{algorithm}
  \caption{Computing $(p,m,\derefer)=\join((p_1,m_1,\derefer_1),(p_2,m_2,\derefer_2))$}
  \label{alg:join}

  \begin{algorithmic}[1]
    \State $p_{1,2}\gets p_1\cap p_2$;
    \State $(p'_1,m'_1,\derefer'_1)\gets (p_1,m_1,\derefer_1)$
    \For{each $(x_1,x_2)\in \varsA{p1}\times\varsA{p2}$}
    \If{$x_1$ is equivalent to $x_2$}
    \State Replace $x_1$ by $x_2$ and
    $\derefer(x_1)$ by $\derefer(x_2)$ in $p^{\prime}_1$, $m^\prime_1$,
    and $\derefer^{\prime}_1$\label{algstep:eq-adr}
    \EndIf
    \EndFor
    \For{Each $r\in Dom(m^{\prime}_1)\cap Dom(m_2)$}
    \State Replace $m^{\prime}_1(r)$ by $m_2(r)$ in $p^{\prime}_1$, $m^\prime_1$, and
    $\derefer^{\prime}_1$ \label{algstep:eq-register}
    \EndFor
    \State $p\gets p^{\prime}_1 \cup p_2$\label{algstep:hull}
    \For{each $r\in Dom(m^{\prime}_1)$}
    \State{$m(r)\gets(m^{\prime}_1(r)=m_2(r))~?~m^{\prime}_1(r)~:~\FV$}\label{algstep:unbound-reg}
    \EndFor
    \For{each $a\in Dom(\derefer^{\prime}_1)$}
    \State{$\derefer(a)\gets(\derefer^{\prime}_1(a)=\derefer_2(a))~?~\derefer^{\prime}_1(a)~:~\FV$}\label{algstep:unbound-adr}
    \EndFor
  \end{algorithmic}
\end{algorithm}

This join operation is illustrated in 
in Figure~\ref{fig:example}
for
$join(e,e')$. Here $s_1$ corresponds to the state of $e$ and $s_2$ to
the state of $e'$. At the unification step, $x_5$ in $s_1$ is equivalent
to $x_5$ in $s_2$, so we substitute $x_6$ for $x_8$ in $\derefer_2$ and
in the polyhedron constraints, so we obtain
$p'_1=(p_1\cap\cons{x_7=0,x_6=x_2})$. Then, the convex hull regroups the
constraints of $p'_1$ and $p_2$ on $x_6$ so we obtain
$x_1\leq x_6\leq x_2$. The convex hull also lifts the constraints on
$x_7$ and $x_8$.

\subsection{Soundness}

The general principle of the proof of soundness of an abstract
interpretation framework is based on a \emph{soundness relation}
$\sigma$, which relates the concrete semantics $c$ of a program $p$ to
its abstract semantics $a$. The abstraction is sound (with respect to
$\sigma$) iff $\sigma(c,a)$ holds for any program
$p$~\cite{cousot1992}.

In our case, the abstract semantics is defined by the abstract states
assigned to the edges of a program. We define the \emph{concrete state}
of a program as the valuation of registers and memory locations. Note
that, due to branching instructions, we may have several possible
valuations for the same edge. Formally, a concrete state is a set of
pairs $(m_c,\derefer_c)$. A \emph{register valuation} $m_c$ maps
registers to their value, while a \emph{memory valuation} $\derefer_c$
maps memory addresses to their value. Valuations are partial
because some registers or addresses may be mapped to no value
(i.e. undefined). The soundness relation is stated in the following
theorem.

\begin{theorem}
  For any program $b$, for any edge $e$ of $b$, let $\mathcal{C}$ be the
  concrete state of $e$ and let $(p,m,\derefer)$ be the abstract state
  of $e$ (at some point of the analysis). Then, all valuations
  $(m_c,\derefer_c)$ of $\mathcal{C}$ satisfy the constraints of
  $p$. This is denoted $\sigma(\mathcal{C},(p,m,\derefer))$.
\end{theorem}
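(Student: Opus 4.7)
The plan is to prove the theorem by induction on the sequence of abstract states produced by the fixpoint iteration of Section~\ref{sec:interpretation}. The base case is the entry edge: its abstract state $(\top,\emptyset,\emptyset)$ contains no linear constraints, so $\sigma$ holds trivially. For the inductive step, it suffices to establish a per-operation soundness lemma stating that $Update$, $Unify$, $\join$ and $\widen$ each transform sound input states into sound output states with respect to the corresponding step of the operational semantics (or, for $\join$ and $\widen$, with respect to the union of the two concrete input states). The theorem then follows because the abstract state assigned to any edge at any point of the analysis is the result of finitely many such operations starting from the entry, while the reachable concrete valuations on that edge are produced by the matching concrete transitions; monotonicity of the operations lifts the per-step argument to the fixpoint.

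The per-operation lemmas reduce to case analysis on the instruction. For a binary operation, the induction hypothesis ensures that $m(r_2)$ and $m(r_3)$ concretely evaluate to the values of $r_2$ and $r_3$; since $OP$ is deterministic, the new constraint $x_i = OP(m(r_2),m(r_3))$ is satisfied. The non-linear variant is trivial because a fresh, unconstrained variable admits every concrete value. For branching, the filtering constraint is by definition satisfied by exactly the valuations that follow the selected edge. For \textsc{loadi} and \textsc{load}, the new constraint ties the fresh register variable to the constant or to $\derefer(a)$ for the unique $a\equiv r_2$; consistency of the input state guarantees that $a$ really denotes the load address. For $\join$ and $\widen$, both the convex hull and the widening are upper bounds in the polyhedral lattice, and the preliminary variable substitutions in Algorithm~\ref{alg:join} merely rename pairs of variables whose equality is already forced by the input states, so they cannot exclude any concrete valuation.

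The main obstacle lies in the \textsc{store} and $Merge$ cases, where aliasing must be treated carefully. For $Merge((p,m,\derefer),x_1,x_2)$, the goal is to show that every concrete valuation of $p$ satisfies the convex hull $p_1 \cup p_2$ returned by the operation. Replacing $\derefer(x_2)$ (resp.\ $\derefer(x_1)$) by a fresh, otherwise unconstrained variable only enlarges the solution set; the renaming of $x_2$ as $x_1$ in $p_2$ is justified because $x_1 \equiv x_2$ is entailed by $p$, so any concrete valuation assigns them the same value. Hence every concrete valuation of $p$ lies in both $p_1$ and $p_2$, and a fortiori in their convex hull. For \textsc{store}, the argument is that the join taken over all $a\in\varsA{p}$ overlapping $r_1$ covers every concrete outcome: the destination of the store is either already represented by some $a$, in which case $Replace$ captures the update for the valuations that witness $a = r_1$, or it is a fresh address, in which case the state is left otherwise unchanged and the outer $Create$/$Replace$ adds the actual mapping. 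The delicate point here is that each concrete valuation has a definite destination address, and every possibility is accounted for by one of the branches of the join; verifying this exhaustiveness is the main technical burden of the proof.
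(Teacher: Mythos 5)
Your proposal follows essentially the same route as the paper's proof: induction on the structure of the analysis with a base case at the unconstrained entry state, followed by a per-operation case analysis (binary op, branching, LOADI, LOAD, STORE with the overlap join, and convex hull/widening with the variable substitutions treated as harmless renamings). The only substantive difference is that you additionally argue soundness of $Merge$/$Unify$ explicitly, which the paper's written proof leaves implicit; that is a welcome addition rather than a divergence.
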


\begin{proof}
  We proceed by induction on the structure of the analysis (i.e. on the
  definition of the update, join and widening operations). The base of the induction
  is straightforward: the initial abstract state has no constraints so
  any concrete state is valid.

Now, we need to prove that, for each operation $\update$, $\widen$,
$\join$, assuming that $\sigma$ holds for the input state(s) of these
operations, it also holds for their output state(s).

\paragraph{Binary operation, Branching, LOADI:} Let us first consider
binary operations. In the concrete semantics, we have:

\begin{align*}
  Update_c & (OP~r_1~r_2~r_3, \mathcal{C}) =\bigcup_{(m_c,\derefer_c)\in \mathcal{C}} (m_c[r_1:OP(m_c(r_2),m_c(r_3))],\derefer_c)
\end{align*}

In the abstract semantics, if $OP$ does not define a linear relation,
then no constraints are added, so $\sigma$ directly holds thanks
to the induction hypothesis. Otherwise, the only new constraint is
$x_i=OP(r_2,r_3)$, with $m(r_1)=x_i$. Considering the induction
hypothesis and the fact that for all pairs $(m'_c,\derefer'_c)$ of the output
state, $m'_c(r_1)=OP(m'_c(r_2),m'_c(r_3))$, the new constraint holds. The
\emph{Branching} and \lstinline!LOADI!  cases are proved with similar
reasoning.

\paragraph{LOAD:} In the concrete semantics, we have:

\begin{align*}
  Update_c & ((LOAD~r_1~r_2, \mathcal{C}) = \bigcup_{(m_c,\derefer_c)\in \mathcal{C}}
             (m_c[r_1:\derefer_c(m_c(r_2))],\derefer_c)
\end{align*}

The case where no address variable is equivalent to $r_2$ is trivial
because it introduces no new constraints. Otherwise, we have
$x_i=\derefer(a)$, with $m[r_1:x_i]$. Since $a$ is equivalent to $r_2$,
the property holds.

\paragraph{STORE:} In the concrete semantics, we have:

\begin{align*}
  Update_c &
           (STORE~r_1~r_2,\mathcal{C})=\bigcup_{(m_c,\derefer_c)\in
             \mathcal{C}}
             (m_c,\derefer_c[(m_c(r_1)):m_c(r_2)])
\end{align*}
Let us first assume that there are no polyhedron variables overlapping
$r_1$. In the $create$ case, the new constraints are
$x_i = m(r_1), x_j = m(r_2)$, with $\derefer[x_i:x_j]$, so $\sigma$
holds. In the $replace$ case, the new constraint is $x_i = m(r_2)$, with
$\derefer[a:x_i]$ and $a$ equivalent to $r_1$, so $\sigma$
holds. Finally, if some variable $a$ overlaps $r_1$, we perform a
$\join$ operation between $p$ and $p$ plus the same constraints as for
the $replace$ case. $a$ overlaps $r_1$ means that either $a\neq r_1$, in
which case we add no constraints ($p^{\prime}=p$), or $a=r_1$, in which
case we add the same constraints as for the replace. The soundness of
the $\join$ operation used here is proved below.

\paragraph{$\join$ and $\widen$:} Concerning the $\join$ operation, in the
concrete semantics we have:
\begin{align*}
  \join_c &
            (\mathcal{C}_1,\mathcal{C}_2)= \mathcal{C}_1\cup\mathcal{C}_2
\end{align*}
Let $(p_1,m_1,\derefer_1)$ and $(p_2,m_2,\derefer_2)$ be the abstract
states corresponding to $\mathcal{C}_1$ and $\mathcal{C}_2$. In the
abstract semantics, we join polyhedra by computing their convex hull. By
definition of the convex hull, and by the induction hypothesis, all the
valuations of $\mathcal{C}_1\cup\mathcal{C}_2$ satisfy the constraints
of $p_1\cup p_2$. Concerning register and memory mappings, the
unification of equivalent variables (lines~\ref{algstep:eq-adr}
and~\ref{algstep:eq-register} of Algorithm~\ref{alg:join}) does not
change constraints. The introduction of fresh variables, when a register
or memory location is unbound in one of the input states
(lines~\ref{algstep:unbound-reg} and~\ref{algstep:unbound-adr}),
effectively relaxes the constraints on it, so the soundness
holds. Soundness of operation $\widen$ holds, from a similar reasoning.

~ \qed
\end{proof}

\subsection{Optimization}
\label{sec:optimizations}

The complexity of our method depends on the number of variables and the
number of constraints created as the analysis progresses. The number of
variables introduced can be easily upper-bounded. For every instruction
in the code, we introduce at most 5 variables: two new variables for
STORE and 3 new variables for the \emph{Merge}. This may seem a lot,
however several optimizations are possible.

The most relevant is the elimination of unused variables from the
polyhedra as the analysis progresses: any variable that is not in $m(.)$
or in $\derefer(.)$ can be safely removed from the polyhedra by
performing a projection on the remaining (used) variables. For example,
the \emph{Merge} operation unifies two existing variables, thus it is
easy to see that after a \emph{Merge} the number of useful variables in
the polyhedron is actually reduced by one. The elimination of unused
variables is implemented in the current version of our tool.

Several other optimizations are possible. An important one is to
do the analysis at the level of functions: when the function
ends, we can safely remove all variables that refer to the
local context of the function. Also, by introducing techniques from
\emph{data structure analysis}, we can significantly reduce the number
of variables that are necessary to investigate the properties of
simple data structures like arrays. We plan to implement such
optimizations as future work.



\section{Loop bounds}

In this section, we show how to apply our abstract interpretation to the
problem of loop bounds estimation. We are interested in two types of
bounds for each loop. The \emph{max} bound specifies the maximum number
of times the loop body will be executed, for each complete execution of
the loop. The \emph{total} bound specifies the total number of times the
loop body will be executed in the whole program. The distinction makes
sense only in the case of nested loops.

\begin{ex}Consider the following code snippet:
\label{ex:tloop}
\begin{lstlisting}[basicstyle=\scriptsize\ttfamily]
int k = 0;
for (int i = 0; i < 10; i++)
  for (int j = 0; j < i; j++)
    k++;
\end{lstlisting}
In this example, the max bound of the inner loop is $9$, while
its total bound is $45$ (that is, $\sum_{n=1..9} n$). The max and total
bound of the outer loop are equal to $10$.
\end{ex}

The general idea of our loop bound estimation is to count loop
iterations using ``virtual'' registers. Before starting the analysis,
for each loop $l_h$ we create two virtual registers $rm_{l_h}$ (for the
max bound) and $rt_{l_h}$ (for the total bound). The constraints on these
registers are updated during the abstract interpretation of the program.


To analyse max bounds, special updates are applied on loop entries and
on loop back-edges (in addition to the updates defined previously in
Section~\ref{sec:interpretation}). When interpreting an entry edge of
loop $l_h$, we assign value $0$ to the corresponding max bound virtual
register:
\begin{align*}
UpdateEntryMax((p, m, d), l_h) = Update(LOADI~rm_{l_h}~0, (p, m, d))
\end{align*}

When interpreting a back-edge of loop $l_h$, we increment the max bound
virtual register (to simplify the presentation, we abusively directly
add a constant value instead of a register content):
\begin{align*}
UpdateIterMax(p, m, d, l_h) = Update(ADD~rm_{l_h}~rm_{l_h}~1, (p, m, d))
\end{align*}

Let $(p_f, m_f, d_f)$ be the final abstract state, i.e. the state
obtained as output of the exit node of the program. Once the analysis is
complete, the max bound of a loop $l_h$ is computed as
$\maxpoly(p_f,m[rm_{l_h}])$.


The analysis of total bounds presented here only considers
\emph{triangular loops}. A triangular loop (see Example~\ref{ex:tloop}
for instance) consists of an inner loop $l_i$ nested inside an outer
loop $l_o$, where the index of the inner loop depends on the index of the
outer loop. The objective of the analysis is to try to establish a
linear relation between them.

First, we define two auxiliary functions. The function
$Relation(p, x_1, x_2)$ tries to find a linear relation between
variables $x_1$ and $x_2$ in $p$:

\[
Relation(p, x_1, x_2) =
\begin{cases}
  (A, B, C) &\text{ if }A.x_1 + B.x_2 \le C \text{ in }\proj{p}{x_1,x_2} \\  
  undefined &\text{ otherwise}
\end{cases}
\]

The function $Total(A, B, C, M)$ computes a sum based on coefficients
$A$, $B$, $C$ (provided they are not undefined) and on a bound $M$:
\begin{align*}
  Total&(A, B, C, M) = \sum_{i=0}^{M - 1}{max\left(0, \left\lfloor {{C - Bi} \over {A}} \right\rfloor\right)}
\end{align*}

Let us now detail the analysis. At the program entry, each total bound
virtual register is set to $0$. The $UpdateIterTotal$ function is applied
on the back-edge of the inner loop. It updates the total bound virtual
register:
\begin{align*}
UpdateIterTotal((p, m, d), l_i) = Update(ADD~rt_{l_i}~rt_{l_i}~1, p, m, d)
\end{align*}

The $UpdateExitTotal$ function is applied on the exit-edge of the inner loop. It
bounds the value of $rt_{l_i}$:
\begin{align*}
	UpdateExitTotal&((p, m, d), l_i, l_o) =  (p^{\prime}, m, d) \\
&p' = p \cap \cons{m[rt_{l_i}] \le t} \\
&(A,B,C) = Relation(p, m[rm_{l_i}], m[rm_{l_o}]) \\
&M = \maxpoly(p,m[rm_{lo}]) \quad \quad t = Total(A, B, C, M)
\end{align*}

Let us consider Example~\ref{ex:tloop}. In the input state
$(p,m,\derefer)$ of the
back-edge of $l_i$ the constraint $rm_{l_o} = rm_{l_i}$ holds, so
$Relation(p, rm_{l_i}, rm_{l_o}])=(1,-1, 0)$. Since
$\max(p,rm_{l_o})=10$, we get
$Total(1, -1, 0, 10)=\sum_{n=1}^{10-1} n= 45$. Hence, we add constraint
$rt_{l_i} \le 45$ to $p$.

Note that the virtual register $rt_{l_i}$ is not actually used to
compute the total loop bound. It can however be used to analyse code
executed after the nested loops. For instance, in
Example~\ref{ex:tloop}, we obtain the constraint $k\leq 45$, should $k$
appear somewhere later in the program.


\section{Experimental results}

Our methodology is implemented in a prototype called \ourtool,
as a plugin of OTAWA (version 2.0), an open source WCET computation
tool~\cite{otawa}. \ourtool\ relies on OTAWA for CFG construction and
manipulation, and on PPL~\cite{bagnara2008parma} for polyhedra
operations. The analyses have been executed on a PC with an Intel core
i5 3470 at 3.2 Ghz, with 8 Gb of RAM. Every benchmark has been
compiled with ARM crosstool-NG 1.20.0 (gcc version 4.9.1) with
\textsf{-O1} optimization level.


First, we report the results of our experiments on the M\"alardalen
benchmarks~\cite{gustafsson2010malardalen} in Table~\ref{tab:xp}. We exclude benchmarks that are not supported by OTAWA,
mainly due to floating point operations or indirect branching
(e.g. \lstinline!switch!). We compare \ourtool\ with SWEET~\cite{lisper2014sweet}, Pagai~\cite{henry2012pagai} and
\orange~\cite{bonenfant2008orange}. For each benchmark, we report: the number of lines of code (in the C source), the
total number of loops, the number of loops that are correctly bounded by
each tool, and the computation time. We do not report the computation
time for SWEET because we only had access to it through an online
applet. For \orange, computation time is below the measurement
resolution (10ms), except for $edn$, where it reaches 50ms. 

The execution time of \ourtool\ is typically higher than that of Pagai because
we introduce more variables and constraints. We believe we can
reduce the gap with additional optimization of the method and of the
code, however \ourtool\ will probably remain more costly because it
works at a lower level of abstraction.

Concerning loop bounds 
there are two benchmarks for which 
\ourtool\ did not find any loop bound: for bench \textsf{edn}, \ourtool\
is unable to prove that there is no array out-of-bound accesses which
could potentially overwrite the loop index. For
\textsf{janne\_complex}, the difficulty is that it contains complex
loop index updates inside a \lstinline!if-then-else!. Furthermore,
note that Pagai does not compute total loop bounds.


\begin{table}
  \centering
{\small  \begin{tabular}{|c||c|c|c|c|c|c|c|c|}
    \hline
	\multicolumn{3}{|c}{} & \multicolumn{4}{|c}{Loops Correctly Bounded} &  \multicolumn{2}{|c|}{Time (ms)}  \\
	\hline
	\emph{Bench}            & \emph{LoC} & \emph{Loops} &\emph{\ourtool} & \emph{SWEET} & \emph{Pagai} & \emph{\orange} & \emph{\ourtool} & \emph{Pagai} \\
    \hline
crc				& 16	& 1		& 1		& 1		& 1		& 1		& 150			& 40\\
fibcall			& 22	& 1		& 1		& 1		& 1		& 1		& 230			& 50	\\
janne\_complex	& 26	& 2		& 1		& 2		& 1		& 1		& 870			& 140	\\
expint			& 56	& 3		& 3		& 2		& 3		& 3		& 850			& 9140	\\
matmult		    & 84	& 5		& 5		& 5		& 5		& 5		& 3640			& 1380	 \\
fdct			& 149	& 2		& 2		& 2		& 2		& 2		& 12450			& 2150 \\
jfdctint	    & 165	& 3		& 3		& 3		& 3		& 3		& 10920			& 1960 \\
fir				& 189	& 2		& 2		& 2		& 2		& 1		& 11630			& 390	\\
edn				& 198	& 12	& 12    & 12    & 9		& 12    & 25190			& 15660 \\
ns		        & 414	& 4		& 4		& 4		& 4		& 4		& 1700			& 380 \\
    \hline
  \end{tabular}}
\caption{Benchmark results}
\label{tab:xp}
\end{table}

We further illustrate the differences between tool capabilities on the
two examples of Figure~\ref{fig:loop-ex}. For both examples, \ourtool\
provides the correct max and total loop bounds. For the example
\textsf{foo1}, \orange\ fails to compute the max and total bounds of
the inner loop, because it does not notice that
\lstinline!i-x<10!. For \textsf{foo2}, Pagai does not find the max
loop bound (the loop is considered unbounded), because it does not
infer that \lstinline!*ptr=&bound! when executing instruction
\lstinline!*ptr=15!.

\begin{figure}
  \begin{minipage}{.48\linewidth}
    \begin{lstlisting}[basicstyle=\scriptsize\ttfamily]
foo1(int x) { 
  int i = 0, j = 0, k = 0;
  for (i = x; i < (x + 10); i++)
    for (j = 0; j < (i - x); j++);
}
    \end{lstlisting}
  \end{minipage}
  \hspace{.5cm}
  \begin{minipage}{.45\linewidth}
    \begin{lstlisting}[basicstyle=\scriptsize\ttfamily]
foo2() {
  int i, bound = 10;
  int *ptr = &bound;
  ptr++; ptr--; *ptr = 15; k = 0;
  for (i = 0; i < bound; i++);
}      
    \end{lstlisting}
  \end{minipage}
  \caption{Loop examples}
  \label{fig:loop-ex}
\end{figure}


\section{Conclusion}
In this paper we propose a novel technique for performing abstract
interpretation of binary code using polyhedra. Our method consists in
adding new variables to the polyhedra as the analysis progresses, and
maintaining a correspondence with registers and memory addresses.
Thanks to the relational properties of polyhedra, our technique
naturally provides information on pointer aliasing when compared to
other techniques based on non-relational domains. While the complexity
of our method is currently still relatively high, we believe that
there is room for improvement: we are planning to apply some
well-known techniques from static analysis to reduce the number of
variables to analyse.



\bibliographystyle{splncs}

\newpage
\bibliography{biblio}

\end{document}